\documentclass{article}
\usepackage{amsmath, amssymb, amsthm}
\usepackage{booktabs}

\numberwithin{equation}{section}

\newtheorem{theorem}{Theorem}

\usepackage{graphicx}
\usepackage{tablefootnote}
\begin{document}
\title{Gaussian-derivative expansion of detector resolution functions via convolution}

\author{Zan Ren\thanks{University of Chinese Academy of Sciences, Beijing, China;
\texttt{zan.ren@cern.ch}}}

\date{\today}

\maketitle

\begin{abstract}
In detector-based spectroscopy and high-energy physics, the observed lineshape of a narrow resonance is distorted by the finite resolution of the detector. While the Gaussian is the standard first approximation, realistic resolution functions deviate from it in nontrivial ways. We present a unified Fourier-space framework in which a resolution function that can be written as a Gaussian convolved with a kernel, $f=G*h$, is expanded into a series of Gaussian derivatives. The Voigtian distribution serves as the prototype, and the mechanism extends to arbitrary kernels whose Fourier transforms are analytic at the origin and of exponential type. We establish an admissibility criterion, and then examine several kernels used in practice--- smeared Hypatia, Cruijff, the Laplacian distribution, and the generalised hyperbolic core---classifying each as directly, perturbatively, or asymptotically admissible. The framework thus provides a common language for comparing and constructing non-Gaussian resolution parametrisations around the Gaussian.
\end{abstract}

\section{Introduction and motivation}

In high-energy physics (HEP) and related spectroscopic experiments, the measured invariant-mass (or energy) distribution of a narrow resonance is never the ideal line shape dictated by the underlying physics. Rather, it is distorted by the finite resolution of the detector. The standard first approximation is to model this distortion as a convolution of the intrinsic lineshape with a Gaussian $G(x;\sigma)$, motivated by the central-limit theorem: many small, independent sources of measurement uncertainty (hit-position uncertainty, tracking extrapolation, multiple scattering, calorimeter energy deposition fluctuations, etc.) add up to an approximately Gaussian smearing.

However, the Gaussian is at best an approximation. In realistic detectors, the resolution function is not perfectly Gaussian, and the deviations matter. There are two broad classes of non-Gaussian effects that one wishes to capture:

\begin{itemize}
    \item \textbf{Tail-type modifications (asymmetric power-law tails).} Radiative processes (bremsstrahlung photons) and other final-state radiation produce events that are shifted away from the nominal mass, giving rise to asymmetric tails. The Crystal Ball (CB) function and its generalisation, the Hypatia function, are prominent examples of this class: they augment a Gaussian-like core with one or two power-law tails to describe such radiative losses.
    \item \textbf{Kernel-type modifications (modified core shape).} The very \textit{core} of the resolution function may differ from a pure Gaussian, for instance because the per-event mass uncertainty itself fluctuates event-by-event, or because different detector subsystems contribute with different effective resolutions. Models such as the (smeared) Hypatia distribution, which replaces the Gaussian core by a generalised hyperbolic core, and the Cruijff function, which uses different left/right widths together with non-Gaussian tail terms, belong to this class.
\end{itemize}

This study focuses on the second class: {\it how to modify the Gaussian kernel itself}, rather than merely appending tails. The simplest conceptual idea is to regard the true resolution function as a \textit{Gaussian plus higher-order correction terms}, in the same spirit as a Taylor expansion generates higher-order terms. Convolution provides a natural and rigorous language for this idea. The Voigtian---the convolution of a Gaussian with a (non-relativistic) Breit--Wigner/Lorentzian---is the simplest non-trivial example and serves as our prototype and guiding example.

The aims of this article are therefore: (i) to show that the Voigtian admits a rigorous interpretation as a perturbative, infinite-order correction to the Gaussian; (ii) to abstract this mechanism into a general convergence framework valid for an arbitrary admissible kernel; (iii) to examine, for several concrete functions used in practice (the Laplace distribution, the generalised hyperbolic / Hypatia family, and the Cruijff function), the extent to which each can be written as a Gaussian-derivative series; and (iv) to classify them into convergently and asymptotically admissible cases.

\section{The Voigtian distribution} \label{sec:voig}
\subsection{The role of the Voigtian in modeling the detector resolution}

The Voigtian probability density function (PDF) is defined as the convolution of a Gaussian $G(x;\sigma)$ and a non-relativistic BW distribution (also known as a Lorentzian) $L(x;\gamma)$:
\begin{equation}
V(x;\sigma,\gamma) = \int_{-\infty}^{\infty} G(x';\sigma)\,L(x-x';\gamma)\,dx',
\end{equation}
where
\begin{equation}
G(x;\sigma)=\frac{1}{\sigma\sqrt{2\pi}}e^{-x^2/(2\sigma^2)},\qquad
L(x;\gamma)=\frac{\gamma/\pi}{x^2+\gamma^2}.
\end{equation}

To show that the Voigtian can be interpreted as a high-order correction to the Gaussian, we use Fourier transforms $\mathcal{F}:f(x)\to\mathcal{F}[f](k)$ to move from $x$-space to $k$-space. Denote
\begin{equation}
\mathcal{F}[f](k)=\int_{-\infty}^{\infty}f(x)e^{-ikx}\,dx.
\end{equation}
Then
\begin{equation}
\mathcal{F}[G](k)=e^{-\sigma^2 k^2/2},\qquad
\mathcal{F}[L](k)=e^{-\gamma|k|}.
\end{equation}
By the convolution theorem,
\begin{equation}
\mathcal{F}[V](k)=e^{-\sigma^2 k^2/2}\, e^{-\gamma|k|}.
\end{equation}

Thus
\begin{equation}
V(x)=\frac{1}{2\pi}\int_{-\infty}^{\infty} e^{-ikx}\, e^{-\sigma^2 k^2/2}\, e^{-\gamma|k|}\,dk.
\end{equation}

Expand $e^{-\gamma|k|}$ in a Maclaurin series in the small parameter $\gamma$:
\begin{equation}
e^{-\gamma|k|}=\sum_{n=0}^{\infty}\frac{(-\gamma)^n}{n!}\,|k|^n.
\end{equation}

Interchanging sum and integral\footnote{This is justified by uniform convergence on any bounded $k$-interval, together with the suppressing effect of the Gaussian factor.} gives
\begin{equation}
V(x)=\sum_{n=0}^{\infty}\frac{(-\gamma)^n}{n!}\,I_n(x),
\qquad
I_n(x)=\frac{1}{2\pi}\int_{-\infty}^{\infty} e^{-ikx}\, e^{-\sigma^2 k^2/2}\,|k|^n\,dk.
\end{equation}

For $n=0$, $I_0(x)=G(x;\sigma)$, i.e.\ {\bf the zeroth-order term is the Gaussian itself}. For $n\ge 1$, the integrals $I_n(x)$ are related to derivatives of the Gaussian or of its Hilbert transform:

\begin{itemize}
    \item If $n=2m$ (even), $|k|^{2m}=k^{2m}$, and
    \begin{equation}
    I_{2m}(x)=(-1)^m\,\frac{d^{2m}}{dx^{2m}}G(x).
    \end{equation}
    \item If $n=2m+1$ (odd), $|k|^{2m+1}=k^{2m+1}\,\operatorname{sgn}(k)$. Its inverse Fourier transform involves the Hilbert transform $H[G](x)=\frac{1}{\pi}\,\mathrm{P.V.}\!\int\frac{G(x')}{x-x'}\,dx'$:
    \begin{equation}
    I_{2m+1}(x)=(-1)^m\,\frac{d^{2m+1}}{dx^{2m+1}}H[G](x).
    \end{equation}
\end{itemize}

Therefore, the Voigtian distribution can be written as
\begin{equation}
V(x)=G(x)+\sum_{m=1}^{\infty}\frac{(-\gamma)^{2m}}{(2m)!}(-1)^m G^{(2m)}(x)
+\sum_{m=0}^{\infty}\frac{(-\gamma)^{2m+1}}{(2m+1)!}(-1)^m H[G]^{(2m+1)}(x).
\end{equation}

The first term is the original Gaussian $G(x)$. The second series contains even-order derivatives of $G(x)$, weighted by powers $\gamma^2,\gamma^4,\dots$. The third series contains odd-order derivatives of the Hilbert transform of $G$, weighted by $\gamma^1,\gamma^3,\dots$. Because the Voigtian is an even function, the odd-order terms combine symmetrically.

In the limit $\gamma\to0$, all higher-order correction terms disappear and $V(x)\to G(x)$. For small but finite $\gamma$, the Voigtian represents a perturbation expansion around the ideal Gaussian, where each derivative term modifies the shape---altering the peak height, width, and tail behaviour (e.g.\ introducing power-law tails, of the kind also produced by CB or Hypatia-type models).

\subsection{Why the Lorentz form ensures convergence}

The key lies in the Fourier-domain property of the Lorentzian,
\begin{equation}
\mathcal{F}[L](k)=e^{-\gamma|k|}.
\end{equation}
This function has two critical features:
\begin{itemize}
    \item It is {\bf non-analytic at $k=0$} (the absolute value creates a cusp), yet when multiplied by the Gaussian factor $e^{-\sigma^2 k^2/2}$ the product becomes $C^\infty$ at $k=0$, because the Gaussian smooths out the singularity.
    \item As $|k|\to\infty$, $|\mathcal{F}[V](k)|\lesssim e^{-\sigma^2 k^2/2}$, which decays {\bf super-exponentially}. This guarantees that the inverse Fourier transform $V(x)$ is a real-analytic function, so its derivative expansion converges everywhere.
\end{itemize}
Thus the Lorentzian provides exactly the right balance: a non-trivial tail structure in coordinate space while remaining sufficiently well behaved in Fourier space to admit a convergent expansion.

\subsection{Numerical validation}

The derivative-expansion mechanism can be illustrated numerically for the prototypical Voigtian case. Figure~\ref{fig:voigt_check1} compares the exact Voigtian with the FFT-based numerical convolution $G*L$ and with truncated derivative-series approximations. The FFT convolution agrees with the analytic Voigt to within a few percent, validating the numerical implementation. The derivative-series truncation is asymptotic in nature (the cusp of $e^{-\gamma|k|}$ at $k=0$ limits the convergence radius), but its quality improves systematically as $\gamma/\sigma$ decreases.

\begin{figure}[htbp]
\centering
\includegraphics[width=\textwidth]{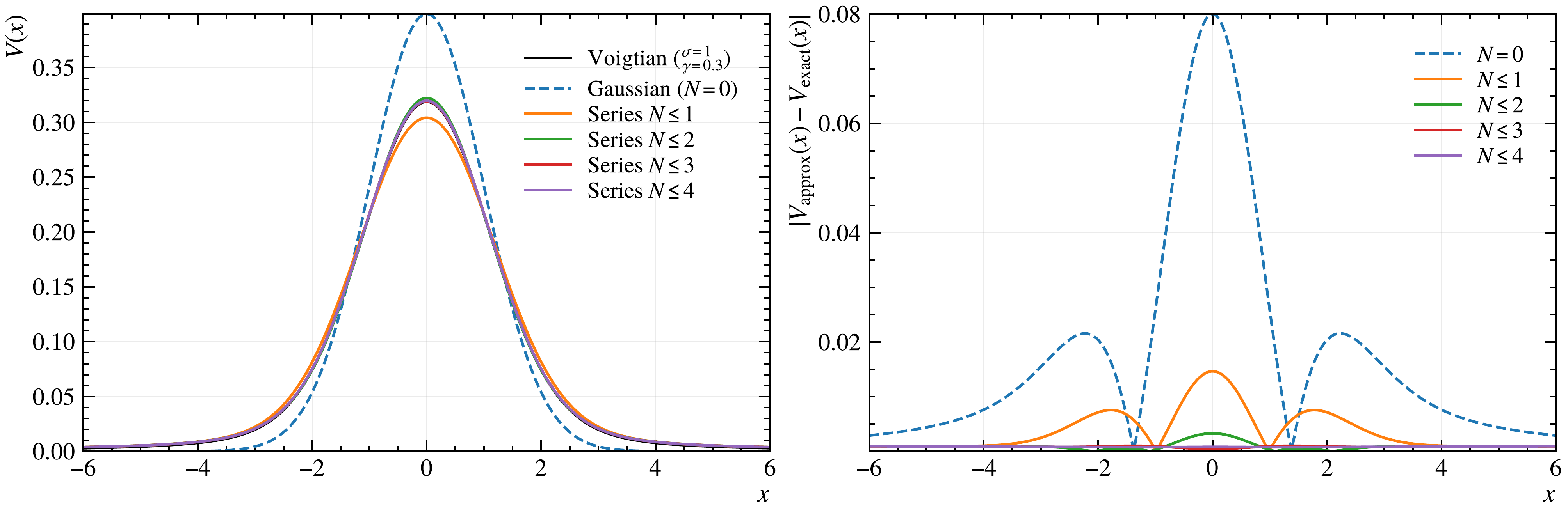}
\caption{Profile comparison (left) and relative-error plot (right) for the Voigtian at $\sigma=1$, $\gamma=0.3$.}
\label{fig:voigt_check1}
\end{figure}

\section{General conditions and admissibility criterion}

We now formulate the precise question addressed in this work. Given a resolution function $f(x)$, we ask whether it can be represented as a series of Gaussian derivatives,
\begin{equation}
f(x)=\sum_{n=0}^{\infty} c_n\,G^{(n)}(x),
\label{eq:target_form}
\end{equation}
for some coefficients $c_n$. By the Fourier identity $\mathcal{F}[G^{(n)}](k)=(-ik)^n e^{-\sigma^2 k^2/2}$, this is equivalent to writing the Fourier transform of $f$ as
\begin{equation}
\hat{f}(k)=e^{-\sigma^2 k^2/2}\,\hat{h}(k),\qquad \hat{h}(k):=\sum_{n=0}^{\infty} c_n\,(ik)^n,
\label{eq:fourier_target}
\end{equation}
where $\hat{h}$ is the Fourier transform of some kernel $h$ with $f=G*h$. Thus \emph{$f$ admits a Gaussian-derivative expansion if and only if $f=G*h$ for a kernel $h$ whose Fourier transform $\hat{h}$ is entire (so that the Taylor series $\sum c_n(ik)^n$ converges) and whose growth is compatible with the Gaussian damping}. The precise sufficient condition is given by the following theorem.

\begin{theorem}[Admissibility criterion]
\label{thm:admissible}
Let $f\in L^1(\mathbb{R})$ be a resolution function with Fourier transform $\hat{f}$, and write $\hat{f}(k)=e^{-\sigma^2 k^2/2}\,\hat{h}(k)$. Suppose that
\begin{enumerate}
    \item $\hat{h}$ is analytic in a neighbourhood of $k=0$ and admits a Taylor expansion $\hat{h}(k)=\sum_{n\ge0}\hat{h}^{(n)}(0)k^n/n!$ valid for $|k|<\delta$;
    \item there exist constants $C,s>0$ such that $|\hat{h}(k)|\le C e^{s|k|}$ for all $k\in\mathbb{R}$ (exponential-type bound).
\end{enumerate}
Then $f=G*h$ admits the expansion
\begin{equation}
f(x)=\sum_{n=0}^{\infty}\frac{\hat{h}^{(n)}(0)}{n!}\,(-i)^n\,G^{(n)}(x),
\label{eq:general_series}
\end{equation}
which converges absolutely and uniformly on compact $x$-sets.
\end{theorem}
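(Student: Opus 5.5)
The plan is to work entirely on the Fourier side and reduce the statement to one application of dominated convergence. By Fourier inversion (with the sign convention of Section~\ref{sec:voig}),
\[
f(x)=\frac{1}{2\pi}\int_{-\infty}^{\infty} e^{ikx}\,e^{-\sigma^2k^2/2}\,\hat h(k)\,dk .
\]
If the Taylor series of $\hat h$ can be inserted and integrated term by term, each term gives
\[
\frac{\hat h^{(n)}(0)}{n!}\cdot\frac{1}{2\pi}\int e^{ikx}k^n e^{-\sigma^2k^2/2}\,dk .
\]
By the identity $\mathcal{F}[G^{(n)}](k)=(-ik)^n e^{-\sigma^2k^2/2}$ stated before the theorem, this equals $\frac{\hat h^{(n)}(0)}{n!}(-i)^nG^{(n)}(x)$ up to the sign bookkeeping fixed by that convention. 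This is exactly (\ref{eq:general_series}). So everything rests on justifying the interchange.

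For that I need a majorant $M(k)\ge\sum_n |c_n||k|^n$, with $c_n=\hat h^{(n)}(0)/n!$, such that $M(k)e^{-\sigma^2k^2/2}\in L^1(\mathbb{R})$. Given such an $M$, dominated convergence applies. Since $|e^{ikx}|=1$, the same majorant also gives absolute convergence of $\sum_n |c_n|\,|G^{(n)}(x)|$ and uniform convergence in $x$, on compacts and in fact on all of $\mathbb{R}$. The natural target is $M(k)=C'e^{s'|k|}$, which is certainly integrable against the Gaussian.

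The main obstacle is obtaining this majorant from the hypotheses, and here I expect the statement as literally written to be insufficient. With analyticity only on $|k|<\delta$, the series $\sum c_nk^n$ does not even converge for $|k|\ge\delta$. The real-line bound in hypothesis~2 does not control the $c_n$. Indeed, $\sup_x|G^{(n)}(x)|$ grows roughly like $\sqrt{n!}\,\sigma^{-n}$ (Hermite-function asymptotics), so $c_n\sim\delta^{-n}$ would make the series (\ref{eq:general_series}) diverge.

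I would therefore read the hypotheses as the paper's preceding discussion suggests: $\hat h$ extends to an entire function with $|\hat h(z)|\le Ce^{s|z|}$ for complex $z$. Cauchy's estimate on the circle $|z|=r$ then gives $|c_n|\le Ce^{sr}r^{-n}$. Choosing $r=n/s$ yields $|c_n|\le C(es/n)^n$. Summing, with Stirling's $n^n\ge n!$,
\[
\sum_n|c_n||k|^n\le C\sum_n\frac{(es|k|)^n}{n!}=Ce^{es|k|},
\]
which is the required majorant with $s'=es$. I would first try to derive complex-plane growth from the real bound plus analyticity (e.g.\ via Phragm\'en--Lindel\"of). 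I expect that to fail in general, so the honest version of the proof states the entire, exponential-type hypothesis explicitly and proceeds as above.
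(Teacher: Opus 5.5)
Your diagnosis is correct, and it can be sharpened into a counterexample. The statement as literally written is false, so strengthening the hypothesis is necessary, not a retreat.

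Take $\hat h(k)=1/(1+\lambda^2k^2)$, i.e.\ $f=G*f_L$ with $f_L$ the Laplace density. Its Taylor series at $0$ converges for $|k|<1/\lambda$, and it is bounded by $1$ on $\mathbb{R}$, so (i) and (ii) both hold. Yet \eqref{eq:general_series} becomes $\sum_m\lambda^{2m}G^{(2m)}(x)$, which is the paper's own divergent Laplace series. Its terms at $x=0$ have modulus $(2m-1)!!\,(\lambda/\sigma)^{2m}/(\sigma\sqrt{2\pi})\to\infty$. So you can skip the Phragm\'en--Lindel\"of attempt.

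With $\hat h$ entire and $|\hat h(z)|\le Ce^{s|z|}$, your argument is complete. It uses Cauchy's estimate with $r=n/s$, the majorant $Ce^{es|k|}$, dominated convergence, and the M-test via $|G^{(n)}(x)|\le\frac{1}{2\pi}\int|k|^ne^{-\sigma^2k^2/2}\,dk$. This even gives uniform convergence on all of $\mathbb{R}$.

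No sign hedging is needed either. With the inversion formula you wrote, differentiating $G(x)=\frac{1}{2\pi}\int e^{ikx}e^{-\sigma^2k^2/2}\,dk$ under the integral gives $\frac{1}{2\pi}\int e^{ikx}k^ne^{-\sigma^2k^2/2}\,dk=(-i)^nG^{(n)}(x)$ exactly. The identity $\mathcal{F}[G^{(n)}](k)=(-ik)^ne^{-\sigma^2k^2/2}$ quoted before the theorem belongs to the opposite sign convention.

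The paper takes a different route. It splits the inversion integral at $|k|=\delta$, expands only the inner part termwise, and bounds the outer part by $Ce^{-\sigma^2k^2/2+s|k|}$. It then asserts that the remainder is controlled by $\sum_{n>N}s^n/n!$. That argument does not close, for two reasons.

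\begin{itemize}
\item Termwise integration over $|k|\le\delta$ produces truncated moments rather than $(-i)^nG^{(n)}(x)$. The missing pieces $c_n\frac{1}{2\pi}\int_{|k|>\delta}e^{ikx}k^ne^{-\sigma^2k^2/2}\,dk$ are never summed, and they are exactly where the divergence in the counterexample sits. Bounding the outer integral of $\hat f$ says nothing about them.
\item The remainder bound presupposes $|c_n|\lesssim s^n/n!$. That is precisely what your Cauchy estimate extracts from the complex exponential-type bound, and what the real-axis bound (ii) cannot supply.
\end{itemize}

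Your single global majorant removes the need for a split altogether. The price is scope: your theorem covers only entire $\hat h$. So it does not support the paper's convergent classification of the smeared Hypatia function. Its generalised-hyperbolic characteristic function is bounded on $\mathbb{R}$ but analytic only in a strip, which is another case where the literal hypotheses hold but nothing guarantees the conclusion.
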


\begin{proof}
Under the change of variable $\hat{h}(k)=\hat{f}(k)e^{+\sigma^2 k^2/2}$, the representation \eqref{eq:fourier_target} is exact. Condition~(i) gives a convergent Taylor expansion of $\hat{h}$ near $k=0$; condition~(ii) controls its growth.\footnote{The criterion is stated in terms of $\hat{h}=\hat{f}\,e^{+\sigma^2 k^2/2}$ rather than $\hat{f}$ alone, because the Gaussian factor $e^{-\sigma^2 k^2/2}$ in $\hat{f}$ is already accounted for by the Gaussian-derivative basis. In practice, to test whether a given $f$ is admissible, one computes $\hat{h}(k)=\hat{f}(k)e^{+\sigma^2 k^2/2}$ and checks conditions~(i)--(ii).} Splitting the inverse-Fourier integral for $f$ at $|k|=\delta$, the inner part is expanded termwise (uniform convergence on $|k|\le\delta$ justifies interchange), while the outer part is bounded by $C e^{-\sigma^2 k^2/2+s|k|}$, which is integrable because the Gaussian quadratic dominates any linear exponent for large $|k|$. The remainder after $N$ terms is controlled by $\sum_{n>N} s^n/n!\to0$. Hence the series \eqref{eq:general_series} converges absolutely and uniformly on compact sets.
\end{proof}

\section{Extensions to specific kernels} \label{sec:extension}

We now examine several resolution functions used in HEP and statistics. For each $f$, we form $\hat{h}(k)=\hat{f}(k)e^{+\sigma^2 k^2/2}$ and test whether $\hat{h}$ satisfies the two conditions of Theorem~\ref{thm:admissible}. If both hold, $f$ is \emph{convergently admissible} ($f=G*h$ exactly, with convergent series). If $\hat{h}$ is analytic at $k=0$ but fails the exponential bound (so the formal series diverges), $f$ is \emph{asymptotically admissible} (a controlled asymptotic expansion for parameters near the Gaussian limit). If $f$ is not of the form $G*h$ at all, we indicate the perturbative interpretation where available.

\subsection{Smeared Hypatia: convergently admissible} \label{sec:hypatia}

The Hypatia distribution was introduced by Andreetto \textit{et al.}~\cite{Andreetto:2013} as a generalisation of the Crystal Ball function. Its \emph{smeared} form is explicitly defined as a convolution:
\begin{equation}
\Upsilon(m;\mu,\sigma^{SR},\lambda,\zeta,\beta,a,n,v_0)
= \frac{1}{\sqrt{v_0}}\,e^{-\frac{1}{2v_0}m^2}\;*\;I(m;\mu,\sigma^{SR},\lambda,\zeta,\beta,a,n),
\end{equation}
where $I$ is the unsmeared Hypatia (generalised-hyperbolic) core. This is exactly $f=G*h$ with $h=I$. The kernel $h$ is a generalised hyperbolic density; its Fourier transform $\hat{h}(k)$ is analytic in a neighbourhood of $k=0$ and satisfies an exponential bound $|\hat{h}(k)|\le C e^{s|k|}$ (the same property used for the Apollonios core below, Eq.~\ref{eq:GH_charfunc}). Therefore Theorem~\ref{thm:admissible} applies directly, and the smeared Hypatia function admits the convergent expansion
\begin{equation}
\Upsilon(x) = \sum_{n=0}^{\infty} \frac{\hat{I}^{(n)}(0)}{n!}\,(-i)^n G^{(n)}(x),
\label{eq:hypatia_expansion}
\end{equation}
with absolute convergence uniform on compact sets. The coefficients $\hat{I}^{(n)}(0)$ are the Taylor coefficients of the Hypatia core's characteristic function at $k=0$; they can be obtained by differentiating the closed-form generalised-hyperbolic characteristic function (see Eq.~\ref{eq:GH_charfunc}) at $k=0$, either symbolically or numerically. Thus Hypatia is a prime example of a kernel-type modification that fits naturally into our framework with guaranteed convergence. For the unsmeared Hypatia PDF, see Section~\ref{sec:apollonis}.

\subsection{The Cruijff function: perturbatively admissible}

The Cruijff function is defined as
\begin{equation}
f_{\text{C}}(x;\mu,\sigma_L,\sigma_R,\alpha_L,\alpha_R)
= \exp\!\left(-\frac{(x-\mu)^2}{2\sigma_{L,R}^2 + \alpha_{L,R}(x-\mu)^2}\right),
\label{eq:cruijff}
\end{equation}
with $\sigma_L,\alpha_L$ for $x<\mu$ and $\sigma_R,\alpha_R$ for $x\ge\mu$. This is \emph{not} of the form $G*h$ for any kernel $h$; it is a direct modification of the Gaussian exponent rather than a convolution. Therefore the rigorous convergence theorem does not apply.

Nevertheless, for small $\alpha_{L,R}$, the denominator can be expanded:
\begin{equation}
\frac{1}{2\sigma^2 + \alpha (x-\mu)^2} = \frac{1}{2\sigma^2} \sum_{m=0}^{\infty} \left(-\frac{\alpha}{2\sigma^2}\right)^m (x-\mu)^{2m},
\label{eq:cruijff_expansion}
\end{equation}
leading to a series of polynomial corrections to the Gaussian exponent. In Fourier space, these corrections generate higher-order derivative terms structurally analogous to those in the Gaussian-derivative expansion. Hence Cruijff can be interpreted \emph{perturbatively} as a Gaussian kernel dressed with higher-order corrections, although it does not satisfy the convolution structure required for the exact derivative series.

\subsection{The Laplace distribution: asymptotically admissible}

Consider the Laplace (double-exponential) distribution,
\begin{equation}
f_L(x;\lambda)=\frac{1}{2\lambda}e^{-|x|/\lambda},\qquad \lambda>0,
\label{eq:laplace_pdf}
\end{equation}
as the resolution function to be expanded. Its Fourier transform is
\begin{equation}
\hat{f}_L(k)=\frac{1}{1+\lambda^2 k^2}.
\label{eq:laplace_fourier}
\end{equation}
To test admissibility, form the candidate kernel's Fourier transform:
\begin{equation}
\hat{h}(k) = \hat{f}_L(k)\,e^{+\sigma^2 k^2/2} = \frac{e^{\sigma^2 k^2/2}}{1+\lambda^2 k^2}.
\label{eq:laplace_deconv}
\end{equation}
As $|k|\to\infty$, $\hat{h}(k)$ grows super-exponentially and is not the Fourier transform of any tempered distribution. Hence $f_L$ is \emph{not} of the form $G*h$ in the usual sense, and Theorem~\ref{thm:admissible} does not apply.

Nevertheless, $\hat{f}_L(k)$ is analytic at $k=0$ and admits the Taylor expansion
\begin{equation}
\hat{f}_L(k)=\sum_{m=0}^{\infty}(-\lambda^2 k^2)^m,\qquad |k|<\frac{1}{\lambda},
\label{eq:laplace_taylor}
\end{equation}
so that $\hat{h}(k)$ is also analytic at $k=0$ with Taylor coefficients determined by the product of the two series. Substituting $\hat{f}_L(k)=\sum_{m\ge0}(-\lambda^2 k^2)^m$ into the inverse Fourier representation of $f_L$ and interchanging sum and integral gives the \emph{formal} Gaussian-derivative series
\begin{equation}
f_L(x) \stackrel{\text{formal}}{=} \sum_{m=0}^{\infty} \lambda^{2m}\,G^{(2m)}(x).
\label{eq:laplace_formal}
\end{equation}
This series does \emph{not} converge for any $x$; it is an asymptotic expansion. Indeed, the $m$-th term behaves asymptotically like
\begin{equation}
\lambda^{2m}\,G^{(2m)}(x) \sim \lambda^{2m}\,\frac{(2m)!}{\sigma^{2m}\,m!}\qquad (m\to\infty),
\label{eq:laplace_asymp}
\end{equation}
using the large-order asymptotics of the Hermite-polynomial coefficients of $G^{(2m)}$; for fixed $\lambda>0$ this grows super-exponentially, so the series diverges.

However, for $\lambda\ll\sigma$ (the Laplace distribution much narrower than the Gaussian scale), the first few terms give an excellent approximation: truncating at $m=N$ leaves a remainder of order $\mathcal{O}(\lambda^{2N+2})$, making the expansion practically useful in the near-Gaussian regime. Thus the Laplace distribution is \emph{asymptotically admissible}: it admits a formal Gaussian-derivative series that is divergent but controlled for small $\lambda/\sigma$.

\subsection{The generalised hyperbolic core (Apollonios): asymptotically admissible} \label{sec:apollonis}

The \textit{Apollonios} distribution is defined here as a member of the generalised hyperbolic (GH) family of probability densities~\cite{Andreetto:2013}. Its unsmeared core $I_A(x)$ is a GH density with parameters $(\lambda,\alpha,\beta,\delta,\mu)$. For completeness we record the density and its known closed-form characteristic function; these are standard identities for the GH family and supply the Fourier data needed below.

The GH density is
\begin{align}
I_A(x;\lambda,\alpha,\beta,\delta,\mu)
=& \frac{(\alpha^{2}-\beta^{2})^{\lambda/2}}{\sqrt{2\pi}\,\alpha^{\lambda-1/2}\,
  \delta^{\lambda}\,K_{\lambda}\!\big(\delta\sqrt{\alpha^{2}-\beta^{2}}\big)} \\ \nonumber
  \cdot & \Bigl(\delta^{2}+(x-\mu)^{2}\Bigr)^{\lambda/2-1/4}
  K_{\lambda-1/2}\!\Bigl(\alpha\sqrt{\delta^{2}+(x-\mu)^{2}}\Bigr)\,
  e^{\beta(x-\mu)},
\label{eq:GH_density}
\end{align}
where $K_\nu(\cdot)$ is the modified Bessel function of the second kind, $\alpha>|\beta|\ge0$, $\delta>0$~\cite{Prause1999,Paolella2007}. The GH family contains as special cases the Student-$t$, Laplace, hyperbolic, normal-inverse-Gaussian (NIG) and variance-gamma distributions.

Its characteristic function is known in closed form:
\begin{equation}
\hat{I}_A(k)=
\frac{e^{i\mu k}\,
      (\alpha^{2}-\beta^{2})^{\lambda/2}}
     {\bigl(\alpha^{2}-(\beta+ik)^{2}\bigr)^{\lambda/2}}
\frac{K_{\lambda}\!\bigl(\delta\sqrt{\alpha^{2}-(\beta+ik)^{2}}\bigr)}
     {K_{\lambda}\!\bigl(\delta\sqrt{\alpha^{2}-\beta^{2}}\bigr)},
\qquad \alpha>|\beta|.
\label{eq:GH_charfunc}
\end{equation}
Two properties of $\hat{I}_A(k)$ are relevant. First, analyticity near $k=0$: for $\alpha>|\beta|$ the argument $\alpha^{2}-(\beta+ik)^{2}$ stays away from the branch cut of $K_\lambda$ in the strip $|\operatorname{Im}k|<\alpha-|\beta|$, so $\hat{I}_A(k)$ is analytic there and in particular near $k=0$; hence it admits a convergent Taylor expansion $\hat{I}_A(k)=\sum_{n\ge0}\hat{I}_A^{(n)}(0)k^n/n!$. Second, exponential-type bound: as $|k|\to\infty$ along the real axis, $\hat{I}_A(k)$ decays like $e^{-\delta|k|}$, i.e.\ $|\hat{I}_A(k)|\le C e^{\delta|k|}$.

Now treat $I_A$ as the resolution function $f$ to be expanded, and form
\begin{equation}
\hat{h}(k)=\hat{I}_A(k)\,e^{+\sigma^2 k^2/2}.
\label{eq:gh_deconv}
\end{equation}
Because $\hat{I}_A(k)$ decays only like $e^{-\delta|k|}$ while $e^{+\sigma^2 k^2/2}$ grows super-exponentially, $\hat{h}(k)$ diverges as $|k|\to\infty$ and is not the Fourier transform of any tempered distribution. Therefore $I_A$ is \emph{not} of the form $G*h$, and Theorem~\ref{thm:admissible} does not apply.

Nevertheless, since $\hat{I}_A(k)$ is analytic at $k=0$, substituting its Taylor expansion into the inverse Fourier representation of $I_A$ yields the \emph{formal} Gaussian-derivative series
\begin{equation}
I_A(x) \stackrel{\text{formal}}{=} \sum_{n=0}^{\infty} \frac{\hat{I}_A^{(n)}(0)}{n!}\,(-i)^n\,G^{(n)}(x).
\label{eq:gh_formal}
\end{equation}
This series is asymptotic, not convergent: the Taylor coefficients $\hat{I}_A^{(n)}(0)$ grow too fast to be compensated by the Gaussian derivatives. We now show, however, that for large $\delta$ the asymptotic series is well behaved and low-order truncations are accurate---the basis for calling the GH core \emph{asymptotically admissible}.

We set $\mu=0$ for simplicity. The key is the local behaviour of $\hat{I}_A(k)$ near $k=0$. Define
\begin{equation}
\phi(k):=\sqrt{\alpha^{2}-(\beta+ik)^{2}}-\sqrt{\alpha^{2}-\beta^{2}},
\qquad \phi(0)=0.
\label{eq:GH_phi}
\end{equation}
Using the large-argument expansion $K_\nu(z)\sim\sqrt{\pi/(2z)}\,e^{-z}\bigl(1+\mathcal{O}(1/z)\bigr)$ as $z\to\infty$ in \eqref{eq:GH_charfunc}, the exponential factors in numerator and denominator cancel, and the leading dependence on $k$ is
\begin{equation}
\hat{I}_A(k)\sim \exp\!\bigl(-\delta\,\phi(k)\bigr)\qquad (\delta\to\infty,\ |k|\ \text{fixed}).
\label{eq:GH_local_gauss}
\end{equation}
Expanding $\phi(k)=\dfrac{i\beta}{\sqrt{\alpha^{2}-\beta^{2}}}\,k-\dfrac{1}{2\sqrt{\alpha^{2}-\beta^{2}}}\,k^{2}+\mathcal{O}(k^{3})$, the exponent becomes
$-\delta\phi(k)=i\mu_\delta k-\tfrac12\sigma_\delta^{2}k^{2}+\mathcal{O}(k^{3})$ with $\mu_\delta=-\delta\beta/\sqrt{\alpha^{2}-\beta^{2}}$ and $\sigma_\delta^{2}=\delta/\sqrt{\alpha^{2}-\beta^{2}}$. Thus, to leading order, $\hat{I}_A(k)\approx e^{i\mu_\delta k-\frac12\sigma_\delta^{2}k^{2}}$---the characteristic function of a Gaussian of width $\sigma_\delta\propto\sqrt{\delta}$. This is the precise meaning of ``the GH core approaches a Gaussian for large $\delta$''.

Because $\hat{I}_A$ is analytic at $k=0$, Cauchy's estimate gives, for any $R$ inside the strip of analyticity,
\begin{equation}
|\hat{I}_A^{(n)}(0)|\le \frac{n!}{R^{\,n}}\,\max_{|k|=R}|\hat{I}_A(k)|.
\label{eq:GH_cauchy}
\end{equation}
Choosing $R=\epsilon/\sqrt{\delta}$ ($\epsilon>0$ fixed) and using $|\hat{I}_A(k)|\lesssim e^{-\delta\operatorname{Re}\phi(k)}\approx e^{-\epsilon^{2}/2}$ on $|k|=R$, we obtain
$|\hat{I}_A^{(n)}(0)|\lesssim n!\,(\sqrt{\delta}/\epsilon)^{n}e^{-\epsilon^{2}/2}$. Hence the $n$-th term of the formal series satisfies
\begin{equation}
\left|\frac{\hat{I}_A^{(n)}(0)}{n!}\,(-i)^{n}G^{(n)}(x)\right|
\lesssim \left(\frac{\sqrt{\delta}}{\epsilon\,\sigma}\right)^{\!n}\!|G^{(n)}(x)|,
\label{eq:GH_term}
\end{equation}
where $|G^{(n)}(x)|$ is of Hermite-polynomial type and decays rapidly once $n$ exceeds $\sigma^{2}$. For fixed truncation order $N$, the remainder after $N$ terms behaves as
\begin{equation}
R_N(x):=\left|I_A(x)-\sum_{n=0}^{N}\frac{\hat{I}_A^{(n)}(0)}{n!}\,(-i)^{n}G^{(n)}(x)\right|
=\mathcal{O}\!\left(\delta^{-(N+1)/2}\right),
\label{eq:GH_remainder}
\end{equation}
so the truncation error decreases as $\delta$ grows and as higher orders are retained. The series is therefore a \emph{controlled asymptotic expansion}: divergent in the strict sense (the factorial growth of $\hat{I}_A^{(n)}(0)$ ultimately dominates), but accurate when truncated at low orders for $\delta$ large enough that the GH core is close to Gaussian. This is the same mechanism as for the Laplace distribution, with $\delta^{-1/2}$ playing the role of the small parameter $\lambda/\sigma$. Thus the GH core is \emph{asymptotically admissible}.

In experimental applications such as B-meson mass fits at LHCb or Belle, the Hypatia/GH core is used to marginalise over per-event mass uncertainty, with $\delta$ determined from fits to simulated samples. The fitted $\delta$ takes moderate values chosen so that the resulting core width $\sigma_f\approx\sqrt{\delta/\sqrt{\alpha^{2}-\beta^{2}}}$ matches the detector resolution — typically $\sigma_f\sim 5$–$15\ \mathrm{MeV}/c^{2}$ for B decays, against a mass scale of order $5\times10^{3}\ \mathrm{MeV}/c^{2}$. Consequently $\delta$ is \emph{not} in the extreme large-$\delta$ Gaussian limit; the GH core retains non-negligible non-Gaussian features, consistent with its classification as asymptotically admissible. The convergent case is recovered only after the explicit smearing convolution $G*I$ that defines the smeared Hypatia distribution, where the Gaussian factor enters the Fourier transform of the admissible object.

\subsection{Summary of admissibility}

Table~\ref{tab:kernels} summarises the status of each function discussed.

\begin{table}[htbp]
\caption{Summary of resolution functions and their admissibility.}
\label{tab:kernels}
\centering
\begin{tabular}{p{3.2cm} p{3.5cm} p{5.5cm}}
\toprule
{\bf Function $f$} & {\bf $\hat{h}(k)=\hat{f}(k)e^{+\sigma^2 k^2/2}$} & {\bf Admissibility \& Remarks}\\
\midrule
Voigtian $V=G*L$ & $e^{-\gamma|k|}$ (exp.\ decay) & Convergent series (prototype)\\
\hline
Smeared Hypatia $\Upsilon=G*I$ & $\hat{I}$, gen.\ hyperbolic, exp.-bounded & Convergent series\\
\hline
Cruijff & not $G*h$ & Perturbative (small-$\alpha$ expansion)\\
\hline
Laplace $f_L$ & $\frac{e^{\sigma^2 k^2/2}}{1+\lambda^2 k^2}$, super-exp.\ growth & Asymptotic series; useful for $\lambda\ll\sigma$\\
\hline
GH core $I_A$ & $\hat{I}_A(k)e^{+\sigma^2 k^2/2}$, super-exp.\ growth & Asymptotic series; useful for large $\delta$\\
\bottomrule
\end{tabular}
\end{table}

\section{Summary and conclusions}

We have presented a unified Fourier-space framework for representing detector resolution functions as series of Gaussian derivatives. The central question is whether a given $f$ can be written as $f=\sum_n c_n G^{(n)}$, equivalently $f=G*h$ with $\hat{h}=\hat{f}\,e^{+\sigma^2 k^2/2}$ satisfying appropriate analyticity and growth conditions. The Voigtian serves as the prototype: expanding its Fourier transform $e^{-\gamma|k|}$ generates a derivative series in which the Gaussian is the zeroth-order term and higher orders are Gaussian derivatives (even) and Hilbert-transform derivatives (odd). Convergence is guaranteed by the super-exponential damping of the Gaussian factor, and Theorem~\ref{thm:admissible} extends this to any $f=G*h$ whose kernel $h$ has an exponentially bounded Fourier transform.

We have shown that the framework naturally distinguishes three levels of admissibility. The Voigtian and the smeared Hypatia distribution are \emph{convergently admissible}: they are exactly of the form $f=G*h$ and their Gaussian-derivative series converge. The Cruijff function, though not a convolution, admits a perturbative interpretation for small tail parameters. Finally, the Laplace distribution and the generalised hyperbolic core (Apollonios) are \emph{asymptotically admissible}: they can be formally expanded as Gaussian-derivative series, but the series diverge. Nevertheless, for parameters close to the Gaussian limit ($\lambda\ll\sigma$ or $\delta\gg1$), the truncated series provide controlled approximations. This hierarchy clarifies the domain of applicability of the derivative-expansion approach and unifies seemingly disparate resolution models under a common analytical framework.


\begin{thebibliography}{9}
\bibitem{Andreetto:2013}
P.~Andreetto {\em et al.},
{\em Mass distributions marginalized over per-event errors},
Nucl.\ Instrum.\ Meth.\ A {\bf 748} (2014) 113,
arXiv:1312.5000 [physics.data-an].

\bibitem{LHCb:2018pqq}
LHCb Collaboration,
{\em Evidence for an $\eta_c(1S)\pi^-$ resonance in $B^0\to\eta_c(1S)K^+\pi^-$ decays},
Eur.\ Phys.\ J.\ C {\bf 78} (2018) 1019,
arXiv:1809.07416 [hep-ex].

\bibitem{LHCb:2017pqq}
LHCb Collaboration,
{\em Observation of the decays $\Lambda_b^0\to\chi_{c1}pK^-$ and $\Lambda_b^0\to\chi_{c2}pK^-$},
Phys.\ Rev.\ D {\bf 96} (2017) 092005,
arXiv:1709.01980 [hep-ex].

\bibitem{CrystalBall}
M.~Oreglia,
{\em A study of the reactions $\psi'\to\gamma\gamma\psi$} (Ph.D.\ thesis),
SLAC Report SLAC-236 (1980);\\
T.~Skwarnicki,
{\em A study of the radiative cascade transitions between the Upsilon-prime and Upsilon resonances} (Ph.D.\ thesis),
DESY F31-86-02 (1986).

\bibitem{Prause1999}
K.~Prause,
{\em The generalized hyperbolic model: estimation, financial derivatives
 and risk measures},
PhD thesis, University of Freiburg (1999).

\bibitem{Paolella2007}
M.~S.~Paolella,
{\em Intermediate Probability: A Computational Approach},
Wiley (2007), Chapter 8.
\end{thebibliography}
\end{document}